\DeclareMathAlphabet{\mathpzc}{OT1}{pzc}{m}{it}
\theoremstyle{definition}
\theoremstyle{plain}
\newtheorem{teorema}{Theorem}
\numberwithin{equation}{section}
\def \l { \left( }
\def \r {\right) }
\def \ll { \left\lbrace }
\def \rr { \right\rbrace }
\title[] {MOTION AMONG RANDOM OBSTACLES \\
ON  A HYPERBOLIC SPACE }
\begin{document}

	\author{Enzo Orsingher}
	\author{Costantino Ricciuti} 
    \author{Francesco Sisti}

\address{Dip. di Scienze Statistiche, Sapienza - Universit\`a di Roma , Piazzale Aldo Moro 5, 00185 Roma, Italy.}
	\email{enzo.orsingher@uniroma1.it}

\address{Dip. di Scienze Statistiche, Sapienza - Universit\`a di Roma , Piazzale Aldo Moro 5, 00185 Roma, Italy.}
	\email{costantino.ricciuti@uniroma1.it}

\address{Dip. di Scienze di Base e Applicate per l'Ingegneria, Sapienza - Universit\`a di Roma , Via A. Scarpa 16, 00161 Roma, Italy}
	\email{francesco.sisti@sbai.uniroma1.it}

	\keywords{Poisson random fields, hyperbolic spaces, Lorentz model, Boltzmann-Grad limit, kinetic equations, random flights.}
	\date{\today}
	\subjclass[2000]{TO BE DEFINED}

\begin{abstract}
	We consider the motion of a particle  along the geodesic lines of the Poincar\'e half-plane. The particle is specularly reflected when it hits randomly-distributed obstacles that are assumed to be motionless. This is the hyperbolic version of the well-known Lorentz Process studied  by Gallavotti in the Euclidean context. We analyse the limit in which the density of the obstacles increases to infinity and the size of each obstacle vanishes: under a suitable scaling, we prove that our process converges  to a Markovian process, namely a random flight on the hyperbolic manifold.
	\end{abstract}

	%\address{Department of Statistical Sciences, Sapienza University of Rome}
	%\email {costantino.ricciuti@uniroma1.it}
	\keywords{Poisson random fields, hyperbolic spaces, Lorentz model, Boltzmann-Grad limit, kinetic equations, random flights.}
	\date{\today}
	\subjclass[2000]{TO BE DEFINED}	
	
	\maketitle
	
\tableofcontents

%%%%%%%%%%%%%%%%%%%%%%%%%%%%%%%%%%%%%%%%%%%%%%%%%%%%%%%%%%%%%%%%%%%%%%%%%%%%%%%%%%%%%%%
%%%%%%%%%%%%%%%%%%%%%%%%%%%%%%%%%%%%%%%%%%%%%%%%%%%%%%%%%%%%%%%%%%%%%%%%%%%%%%%%%%%%%%
\section{Introduction}

Since modern physics showed how the Euclidean space is merely a local approximation of the geometry underlying the universe, non Euclidean manifolds has been a topic of growing interest among both mathematicians 
%	\author{
% \Large
%Enzo Orsingher\footnote{Dip. di Scienze Statistiche, ''Sapienza'' Universit\'a di Roma , Piazzale Aldo Moro 5, 00185 Roma, Italy. Email: enzo.orsingher@uniroma1.it},  \, Costantino Ricciuti\footnote{Dip. di Scienze Statistiche, ''Sapienza'' Universit\'a di Roma , Piazzale Aldo Moro 5, 00185 Roma, Italy. Email: costantino.ricciuti@uniroma1.it} \,Francesco Sisti\footnote{ Dip. di Scienze di Base e Applicate per l'Ingegneria, ''Sapienza'' Universit\'a di Roma , Via A. Scarpa 16, 00161 Roma, Italy. Email: francesco.sisti@sbai.uniroma1.it}     
%    
and  physicists.

% Indeed every dynamic taking place on a non eucliedean space will be invariably affected by the metric of the space itself.
 %**** Random motions in non-Euclidean spaces have attracted the interest of many mathematicians in the last decades. The main reason of this interest  lies in the fact that  modern physics states that the universe is governed by non euclidean geometries. In particular, random models in hyperbolic spaces have a prominent role in the literature. ****
 
Over the last few decades much attention has been paid to the random motions that occur in non-Euclidean spaces.  In particular, random motions on hyperbolic spaces are often reported in the literature.
Most of the papers focus on hyperbolic Brownian motion, but   random motions at finite velocity have also been  investigated recently (see \cite{cammarota1, cammarota2, cammarota3,degregorio}).

We believe that the Lorentz process, studied in detail by Gallavotti \cite{Gallavotti}, is a paradigmatic model for finite velocity random motions and therefore we have focused our attention on this topic.

The Lorentz process studied by Gallavotti is the motion performed by a particle  in the Euclidean plane $\mathbb{R}^2$, where static spherical obstacles are distributed according to a Poisson probability measure with intensity $\lambda$. The particle is assumed to move along straight lines and to be specularly reflected by the obstacles. The process is clearly non-Markovian, because the trajectories recall the effect of previous collisions. 
The main result of Gallavotti's study is the proof of  consistency of the so-called Boltzmann-Grad limit, in which the radius $r$ of each obstacle decreases to zero and  the density $\lambda $ increases to infinity, so that the mean free path $(2 \lambda r)^{-1}$ remains constant (few collisions regime).   
 Under the Boltzmann-Grad asymptotics, the single-time probability density of the Lorentz process converges to that of a Markovian process, solving a linear Boltzmann equation.

Gallavotti's model  was firstly improved by Spohn \cite{Spohn} and Boldrighini et al. \cite{boldrighini}, and was further developed in subsequent papers. In  \cite{Desvillettes},  Desvillettes and Ricci  studied a variant of the model in which the obstacles are totally absorbing and an external force field is present; in this case the Boltzmann-Grad limit does not lead to a Markovian process, unless a random motion of the obstacles is assumed (with Gaussian distribution of velocities). 
Gallavotti's work has also inspired the approach of  Basile et al. \cite{Basile},
for which there is a  slightly different context, especially because the obstacles are  circular potential barriers instead of being hard spheres.  However, the authors followed in Gallavotti's footsteps and, by suitably scaling the Poissonian density and  the potential intensity, they obtained a Markovian approximation that is governed by a linear Landau equation; the limiting process is halfway between the Boltzmannn-Grad regime (few collisions) and the weak-coupling regime (many collisions).

In this paper we construct the Lorentz process in the Poincar\'e half-plane, which is one of the most popular Euclidean models of hyperbolic spaces.  
For the reader's convenience, we will report some of the basic facts on the Poincar\'e half-plane. It can be defined as the region $\mathbb{H}_2=\{(x,y) \in \mathbb{R}^2 \, :\, y >0\}$ endowed with the metric
\begin{align} \label{metrica iperbolica}
ds ^2= \frac{dx^2+dy^2}{y^2}.
\end{align}

From (\ref{metrica iperbolica}) it follows that  geodesic curves   in $\mathbb{H}_2$ are either Euclidean half-circumferences with their centers on the $x$-axis or  Euclidean lines parallel to the y-axis. Moreover, the hyperbolic distance  between two points  $P_1=(x_1,y_1)$ and $ P_2=(x_2,y_2) $  in $\mathbb{H}_2$, denoted as $d_h(P_1,P_2)$, is given by
\begin{align} \label{distanza tra due punti}
\cosh \bigl ( d_h(P_1,P_2) \bigl ) = \frac{(x_1-x_2)^2 +y_1^2+y_2^2}{2y_1y_2}.
\end{align} 
An  hyperbolic circle with center $C=(x_c, y_c)$ and hyperbolic radius $\eta$ is defined as $ B_{\eta}(C)= \ll  w\in \mathbb{H}_2 : \, d_h(w,C)\leq \eta \rr $ and its border is denoted as $\partial B_{\eta}(C)$; from (\ref{distanza tra due punti}) we obtain that:
\begin{align} \label{hyperbolic circle}
\partial B_{\eta}(C)= \{(x,y) \in \mathbb{H}_2 :\, (x-x_c)^2 + y^2-2yy_c \cosh \eta +y_c^2=0 \}
\end{align}
corresponding to an Euclidean circumference of radius $y_c \sinh \eta$ and center $(x_c,  y_c \cosh \eta)$.
The infinitesimal hyperbolic area is given by
\begin{align} \label{hyperbolic area}
dA= \frac{dxdy}{y^2}
\end{align}
and the area of the hyperbolic circle is thus
\begin{align} \label{Area cerchio iperb}
|B_{\eta}(C)|= 4 \pi \sinh ^2 \frac{\eta}{2}.
\end{align} 

An equivalent definition of the hyperbolic half-plane can be given in the complex domain as the region $\mathbb{H}_2=\{ z\in \mathbb{C} :\, \text{Im}(z) >0\}$ where the measure of the infinitesimal arc length is given by $\frac{|dz|}{Im z}$.  
This formulation is  convenient to express all isometries in $\mathbb{H}_2$ which are  given  by the   M\"obius group of transformations (see \cite{Bona}): 
\begin{align} \label{Moeb}
\mathcal{M}:\mathbb{H}_2 \to \mathbb{H}_2: \  \mathcal{M}(z)= \frac{a z+b}{c z +d} \quad \text{with} \ a,b,c,d \in \mathbb{R} ; \ ad-bc=1
\end{align}
and will play an important role in the present work.

We can now give a brief description of our model. The free particle  moves along geodesic lines and is reflected by the scatterers. Considering that the hyperbolic velocity is defined as

\begin{align} \label{velocita iperbolica}
c(t)=\frac{ds}{dt}= \frac{1}{y} \sqrt{\biggl (\frac{dx}{dt}\biggr )^2+ \biggl (\frac{dy}{dt}\biggr )^2},
\end{align}
it is assumed that the particle moves at constant hyperbolic velocity $c(t)=c$. This means that the hyperbolic distance run by the particle in a time $\Delta t$ is given by $c \Delta t$. Therefore, an Euclidean observer sees the particle moving with a position-dependent velocity equal to $c\, y$.
Without loss of generality, one can assume $c=1$. 

We introduce the notion of Poissonian distribution of obstacles in the  Poincar\'e hyperbolic half-plane.
The obstacles are hyperbolic balls of radius $r$, whose centers  are distributed according to a spatial Poisson process which is homogeneous in the sense of the measure (\ref{hyperbolic area}), i.e. the mean number of obstacle centers per unit hyperbolic area (denoted as $\lambda$) is uniform in $\mathbb{H}_2$. 
This means that the obstacles are identical and  homogeneously distributed in  respect to the hyperbolic metric,  yet to an Euclidean observer they appear to be smaller and denser when approaching to the $x$-axis (see figure \ref{figura traiettoria}). \\
The main result of this study is the analysis of a Boltzmann-Grad-type limit in which  the hyperbolic radius $r$ of each obstacle decreases to zero and the density $\lambda$ in the hyperbolic setting diverges to infinity, so that the mean free path $(2 \lambda \sinh r)^{-1}$ remains constant. Under this limit, the density of our process converges to the density of some Markovian random flight. Moreover, we prove that this limit random motion is similar to the process analysed by M.Pinsky \cite{Pinsky}, who generalized the well-known Euclidean isotropic transport process to the case of an arbitrary Riemannian manifold.

\section{Random obstacles in the Poincar\'e half-plane.}
\subsection{Poisson random fields in $\mathbb{H}_2$}

Assume that a countable set $\Pi$ of points is randomly distributed on the Poincar\'e half-plane $\mathbb{H}_2$ with rate $\lambda (x,y)$. We say that $\Pi$ is a Poisson  random field in $\mathbb{H}_2$ if:
\begin{itemize}
\item For any appropriate set $\mathcal{S} \subset \mathbb{H}_2$, the random variable $N(\mathcal{S})$, namely the cardinality of $\Pi \cap \mathcal{S}$, has the following  distribution:
\begin{equation}
\Pr (N(\mathcal{S})=k)= e^{-\Lambda ( \mathcal{S} )} \frac{(\Lambda ( \mathcal{S} ))^k}{k!}
\end{equation}
with 
\begin{align}
\Lambda (\mathcal{S})=  \int _{\mathcal{S}} \lambda (x,y) \frac{dxdy}{y^2}.
\end{align}
\item For any couple of disjoint regions $\mathcal{S}_1$ and $\mathcal{S}_2$, the random variables  $N(\mathcal{S}_1)$ and $N(\mathcal{S}_2)$ are stochastically independent.
\end{itemize}
We here restrict our attention to the case where the rate $\lambda $ is constant (homogeneous hyperbolic Poisson field). Thus, the number of points inside any set $\mathcal{S}\subset \mathbb{H}_2$ has Poisson distribution with parameter $\lambda |\mathcal{S}|$, where 
\begin{align}
|\mathcal{S}| =\int _{\mathcal{S}} \frac{dxdy}{y^2}
\end{align}
is the hyperbolic area of $\mathcal{S}$. 

Therefore the probability to  have exactly $n$ points in a region $\mathcal{S}$ and to find them inside the hyperbolic elements $dc_1, dc_2,... dc_n$ around $c_1, c_2,... c_n$ is given by
\begin{align} \label{poisson}
\Pr \{P_1 \in dc_1 \dots P_n \in dc_n, N(t)=n \}=     \lambda ^n e^{-\lambda |\mathcal{S}|} dc_1....dc_n
\end{align}
 where
\begin{align}
dc_j=\frac{dx_j dy_j}{y_j ^2}.
\end{align}
It is important to observe that the homogeneous Poisson random field only depends on the measure of areas and therefore it is invariant under the group of isometries of $\mathbb{H}_2$ expressed in (\ref{Moeb}).

To have a more complete description of the homogeneous hyperbolic Poisson field, we treat briefly the distributions of the nearest neighbours points. Let us fix a point $O \in \mathbb{H}_2$  and denote by  $T_k$ the hyperbolic distance between $O$ and the $k^{th}$ nearest point of $\Pi$.

Denoting by $B_{\eta}$ the hyperbolic ball of radius $\eta$ and by $d B_{\eta}$ the infinitesimal anulus of radii $\eta$ and $\eta + d\eta$, we have

\begin{align}
\Pr \lbrace T_k \in d \eta \rbrace &= 
\Pr \lbrace N(B_{\eta})=k-1 \rbrace  \Pr \lbrace N (d B_{\eta})=1 \rbrace \notag \\
&= e^{-\lambda |B_{\eta}|}\frac{(\lambda |B_{\eta}|)^{k-1}}{(k-1)!}\, \lambda |d B_{\eta}|
\qquad k\geq 1, \eta >0 \end{align}

Since $|B_{\eta}|=4 \pi \sinh  ^2 \frac{\eta}{2}$, the anulus $ d B_{\eta}$ has measure $2 \pi \sinh \eta d \eta$  and thus

\begin{equation}
\Pr \l T_k \in d \eta \r = e^{- 4 \pi \lambda   \sinh  ^2 \frac{\eta}{2}} \frac{\l 4 \pi \lambda   \sinh  ^2 \frac{\eta}{2} \r ^{k-1}}{(k-1)!} 2 \pi \lambda   \sinh \eta d \eta
\end{equation}

In particular, the distribution for the nearest neighbour $T_1$ reads
\begin{equation}
\label{Raleigh iperbolica}
\Pr \l T_1 \in  d \eta \r = e^{- 4 \pi \lambda  \sinh  ^2 \frac{\eta}{2}}  2 \pi \lambda \sinh \eta d \eta
\end{equation}

with expectation

\begin{align} \label{valore atteso di T1 iperbolico}
 \mathbb{E} \l T_1 \r =   e^{  2 \pi \lambda}  K_0 (2 \pi \lambda),
\end{align}
where 
\begin{align}
K_0 (z) = \int _0 ^{\infty} e^{-z \cosh t}dt
\end{align}
is the modified Bessel function.
Formula (\ref{Raleigh iperbolica}) is the hyperbolic counterpart of the well known Rayleigh distribution, which describes the distance $T^e_1$ of the nearest neighbour point in the case of a Poissonian random field in the Euclidean plane:
\begin{equation}
\Pr \l T^e_1 \in dr \r = e^{- \lambda \pi r ^2}  2 \pi \lambda r dr
\end{equation}
with mean value
\begin{equation} \label{valore atteso di T1 euclideo}
E \l T^e_1 \r= \frac{1}{2\sqrt{\lambda}}.
\end{equation}
By means of the asymptotic formula for the modified Bessel function,  expression (\ref{valore atteso di T1 iperbolico}) reduces to (\ref{valore atteso di T1 euclideo}) for large values of $\lambda $, namely when the expected distance between Poissonian points decreases and an Euclidean description works well. 

\subsection{Poissonian obstacles}

We now introduce the notion of the Poissonian distribution of obstacles into the hyperbolic half-plane $\mathbb{H}_2$, and we distinguish between hard and soft obstacles, which is a common practice  for motions in Euclidean spaces.

 We are inspired by \cite{shot noise}, where the author studies Poissonian soft obstacles in a particular non Euclidean manifold: the surface of a sphere.
 
Let us consider $\Pi$ to   be a homogeneous hyperbolic Poisson field in $\mathbb{H}_2$ with constant intensity $\lambda$ and let us assume that each point $P\in \Pi$  produces a  potential around itself, whose intensity $\phi$ is a function of the geodesic distance  from $P$. It is assumed that $\phi$ is compactly supported, namely $\phi(\eta)=0$ for $\eta>r$.
The hyperbolic ball of center $P$ and radius $r$, where the function $\phi$ is non-null, is known as a soft obstacle. When a particle hits a soft obstacle, it is subject to an interaction described by $\phi$.  Of course, the obstacles may overlap, which occurs whenever the geodesic distance between two Poissonian points is less than $2r$.
Therefore, at a certain point $Q$, the superposition of the action due to the points $P_1...P_N$ located in a hyperbolic ball $B_r(Q)$ defines a new random field 
\begin{equation} \label{shot noise}
V(Q)= \sum_{j=1}^N \phi \bigl (d_h (P_j Q) \bigr ) 
\end{equation}
where $N$ has Poisson distribution with parameter $\lambda |B_r(Q)|$ and $d_h \bigl (P_j Q \bigr )$ is the geodesic distance between $P_j$ and $Q$. 
Two facts play  fundamental roles. The first  is that the random field (\ref{shot noise}) is homogeneous, meaning that the distribution of $V(Q)$ does not depend on $Q$. 
The second is that (\ref{shot noise}) is isotropic, namely the covariance between $V(Q)$ and  $V(Q')$ only depends on the geodetic distance between $Q$ and $Q'$. For the sake of brevity we omit a complete proof of these facts which can easily be obtained by following the same steps as in \cite{shot noise}.

Hard obstacles are hyperbolic disks of radius $r$ centered at the points of a Poisson random field in $\mathbb{H}_2$. 
In many models of random motions, hard obstacles represent  totally absorbing traps with random locations.  In other models, they act as totally reflecting barriers and can be deemed to be the limiting case of soft obstacles where the following intensity function is considered:
\begin{align*}
\phi(\eta)= \begin{cases}  \infty \qquad &\eta \leq r \\ 
0 \qquad &\eta >r      \end{cases}
\end{align*}

Thus, in what follows, we consider a system of hard obstacles, whose centers are distributed according to a homogeneous hyperbolic Poisson field of constant intensity $\lambda$. A configuration of this kind is homogeneous and isotropic as already explained above.

\section{The Lorentz Process in the Poincar\'e half- plane}
\subsection{Description of the model.}

Let us now consider the following mechanical model. A single particle moves in the Poincar\'e half-plane, where static circular obstacles are distributed according to a Poisson measure. At each instant $t$, the state of the particle is described by the couple $(q,v)$, where $q=(x,y)\in \mathbb{H}_2 $ is the position in the half-plane, and $v=(\cos \alpha, \sin \alpha)$  represents the direction of  motion. Whenever the position $q$ of a particle lies outside the  obstacles, the particle moves along the (unique) geodesic line  tangent to $v$ at the point $q$. We assume that the particle has unit  hyperbolic speed, namely the hyperbolic distance traveled in a time $t$ is equal to $t$. For any initial state $(q,v)$ at $t=0$, the evolution of the particle position until the first collision is given by the geodesic flow $\Phi ^{(q,v)}(t)$, for $t\geq 0$. The explicit expression of the geodesic flow is not essential now and will be given in  the Appendix (formula (\ref{evoluzione senza collisioni})).

When a collision with an obstacle occurs, the particle is reflected on its surface.  

In our model we assume that the particle performs  a "specular reflection" in $\mathbb{H}_2$. Now, in order to generalize the notion of specular reflection from $\mathbb{R}^2$ (where it is straightforward) to $\mathbb{H}_2$, we recall these two basic facts.

%By analogy with elastic collisions of the hard-sphere model, we assume that "specular reflections" occur. In a non-euclidean context, the notion of specular reflection is quite intuitive but not completely obvious. 

%For a rigorous definition, we remind two basic facts.

 The first one is that the measure of hyperbolic angles in the Poincar\'e half-plane corresponds to the measure performed by an Euclidean observer (this is not true in general for all the models of hyperbolic space, for instance the Klein disk model).  The second one is that the angle between two geodesic lines coincides with the one formed by the corresponding Euclidean tangents at the point of incidence, as well as the angle between a geodesic line and a circle is the one detected by the respective tangent lines.
 
Therefore, we refer to the specular reflection in $\mathbb{H}_2$ in the  following way: denoting respectively by $g_1$, $\gamma$ and $g_2$ the pre-collisional geodesic, the tangent to the obstacle and the post-collisional geodesic, we say that the particle is specularly reflected if the angle between $g_1$ and $\gamma$ is equal to the angle between $g_2$ and $\gamma$, as shown in figure (\ref{figura riflessione}).

\begin{figure}
  \centering
  \includegraphics[scale=0.19]{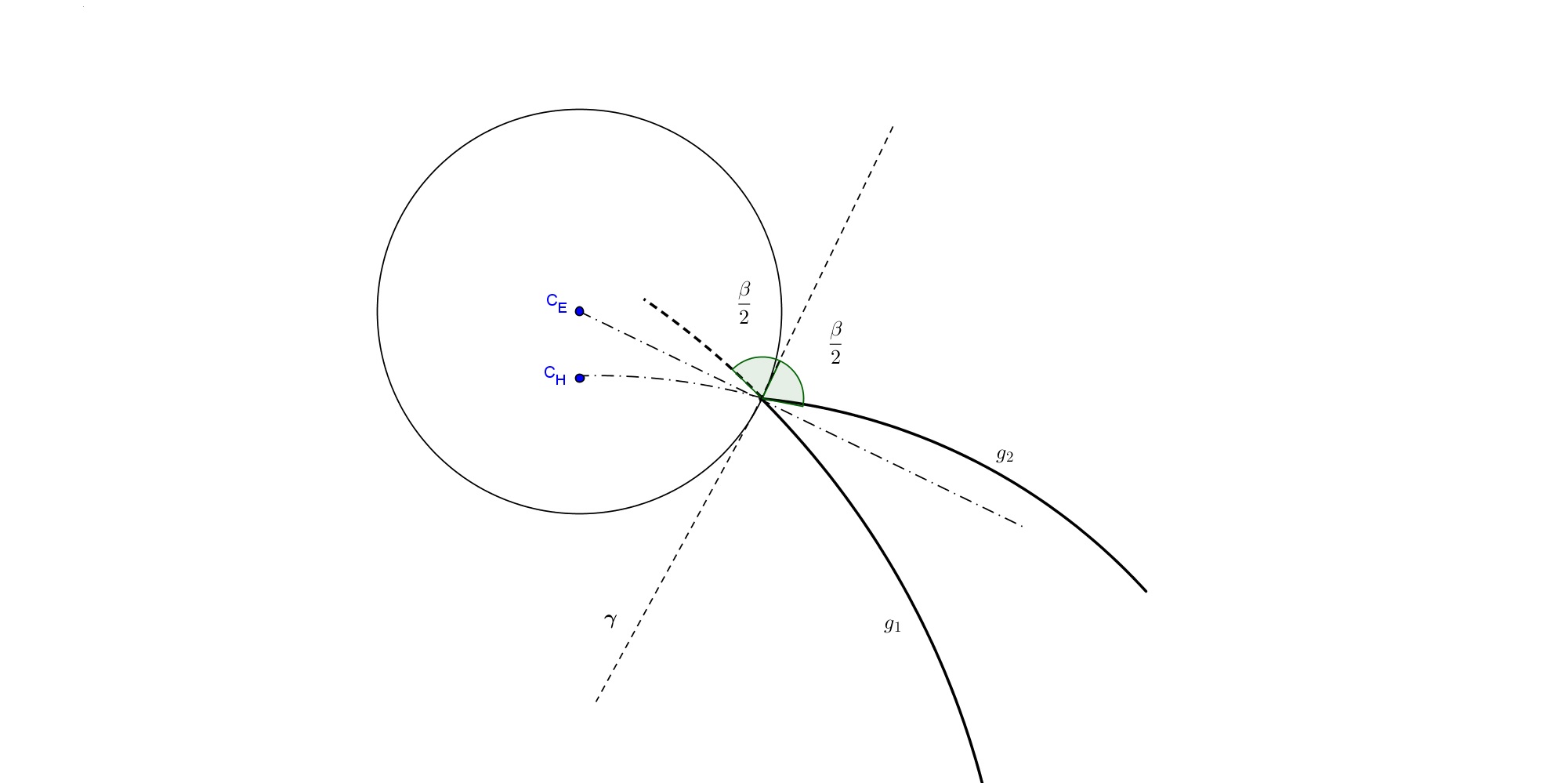}
  \caption{Specular deflection of angle $\beta$ due to an obstacle whose hyperbolic (Euclidean) center is $C_H$ ($C_E$). The angle of incidence and the angle of reflection  are both equal to $\frac{\beta}{2}$.}
  \label{figura riflessione}
\end{figure}

Due to collisions, the sample paths of the moving particle are composed of arcs of circumferences pieced together (for an expression of the piecewise geodesic flow see the Appendix (formula (\ref{evoluzione con le collisioni})).

As a first step, assume now that a configuration of obstacle centers $\{c\}= \{c_1, c_2,... c_j...\}$ is fixed. Let $r$ be the hyperbolic radius of the obstacles.  For a given initial state $(q,v)$, the evolution of the particle position is given by the piecewice geodesic curve $\Phi _{\{c\}}^{(q,v)}(t)$, which clearly only depends on the obstacles of $\{c\}$ centered within a hyperbolic distance $t+r$ from $q$, since the hyperbolic velocity is assumed equal to $1$.  By deriving with respect to $t$ we obtain the Euclidean velocity of the particle (i.e. the velocity perceived by an Euclidean observer) which is denoted by  $\dot{\Phi} _{\{c\}}^{(q,v)}(t)$. The direction of motion is given by the unit vector
\begin{align}
V_{\{c\}}^{(q,v)}(t)= \frac{\dot{\Phi} _{\{c\}}^{(q,v)}(t)}{||\dot{\Phi} _{\{c\}}^{(q,v)}(t)|| }\qquad t\geq 0
\end{align}
where $||.||$ denotes the Euclidean norm.
Hence we define the billiard flow (among the obstacles configuration $\{c\}$) as the following curve on the tangent bundle $\mathbb{H}_2  \times S_1$:
\begin{align} \label{billiard flow}
\mathit{\Psi}_{\{c\}}^{(q,v)}(t)= \bigl ( \Phi _{\{c\}}^{(q,v)}(t), V _{\{c\}}^{(q,v)}(t) \bigr ) \qquad t\geq 0.
\end{align}

By assuming that the locations of the obstacles is random, the evolution of the particle defines a stochastic process $ \{Q_r(t), V_r(t) , t>0 \} $, on  $\mathbb{H}_2 \times S_1$ (the subscript "$r$" representing the radius of the obstacles) that we call hyperbolic Lorentz Process. We denote its joint density by $f_r(q,v,t)$ and suppose that an initial condition $f_r(q,v,0)= f_{in}(q,v)$ is given, such that
\begin{align*}
\int _{\mathbb{H}_2 \times S_1} f_{in}(q,v) dqdv =1.
\end{align*}

 The  function $f_{in}$ should be chosen in such a way that its support lies outside the system of obstacles, but here a difficulty arises since the  obstacles location is random. We can skip this problem by choosing $f_{in}$ as any probability density on $\mathbb{H}_2\times S_1$ and assuming that if the particle initially lies inside an obstacle, it remains at rest forever. It is important to note that such a constraint disappears in the limit of small obstacles considered in this study.
 
   For each $t>0$, the joint density of the hyperbolic Lorentz Process is given by
\begin{align}\label{densita lorentz scritta col valore atteso}
f_r(q,v,t)= \mathbb{E} _{\{c\}} f_{in} \bigl (\mathit{\Psi}_{\{c\}}^{(q,v)}(-t) \bigr )
\end{align}
where the expectation is performed with respect to the Poisson measure.

Before stating the main result of the present work, it is necessary to determine the probability distribution of the free path length among Poissonian obstacles. The calculation requires some properties of hyperbolic geometry, and is shown in detail in the following section.

\subsection{Free path among Poissonian obstacles}
 Let us consider a Poissonian distribution of  spherical  obstacles of hyperbolic radius $r$  in the Poincar\'e half-plane. Suppose that a particle, which is initially located at an arbitrary point $q \in \mathbb{H}_2$, is shot towards an arbitrary direction $v$ and moves  along the geodesic line tangent to $v$ at $q$. 
We are interested in the probability distribution of the first hitting time $T_{(q,v)}$  with the system of obstacles. 
Obviously, under the assumption of unitary hyperbolic speed,  $T_{(q,v)}$ coincides with the  free path length, namely with the hyperbolic distance traveled by the particle without having collisions.

The main idea is the following: the free path $T_{(q,v)}$ is greater than  $t$  if and only if none of the obstacles has its center in the tube
\begin{align}  
\theta (q,v,t)= \biggl \{ p \in \mathbb{H}_2: \inf_ {s \in [0,t]} d_h \bigl (\, p,  \Phi ^{(q,v)}(s)  \bigr)<r \biggr \} 
\end{align}
where $d_h (p,w)$ is the hyperbolic distance between two points $p$ and $w$ of the hyperbolic plane.

In the Euclidean case, the tube is simply given by the union of a rectangle of sides $2r$ and $t$ and two half-circles of radius $r$ (see figure \ref{tubo euclideo}). Some difficulties arise in $\mathbb{H}_2$, where, surprisingly, the two curves at hyperbolic distance $r$ on either side of a geodesic line are not geodesic lines. 

We have then to determine the shape and the hyperbolic area of $ \theta (q,v,t)$. To this aim we use the following representation:  
\begin{align}
\theta (q,v,t)= \bigcup_{0\leq s \leq t} B_r(\Phi^{(q,v)}(s)).
\end{align}
In order to  do that we make use of a suitable transformation in $\mathbb{H}_2$.
Indeed, one can show (see \cite{Bona}, Lemma 2.6) that among M\"obius transformations (\ref{Moeb}) there exists a bijective isometry $\mathcal{M} _{(q,v)}: \mathbb{H}_2 \to \mathbb{H}_2 $
such that\footnote{ We omit the calculations for sake of brevity.  $\mathcal{M} _{(q,v)}$ depends on $(q,v)$ as parameters; for simplicity  we will use the notation $\mathcal{M}$ in the following.} :
 \begin{align} \label{Definition Gamma}
 \mathcal{M}( \Phi^{(q,v)}(s) ) = \Phi^{(\tilde{q},\tilde{v})}(s) \qquad   \forall s \in \mathbb{R}
 \end{align} where  $\tilde{q}=(0,1)$ and $\tilde{v}=(0,1)$, consequently $\Phi^{(\tilde{q},\tilde{v})}(s)=(0,e^{s})$.  
In other words, $\mathcal{M} $ maps any geodesic line into a vertical geodesic line. 
\begin{figure}
  \centering
  \includegraphics[scale=0.19]{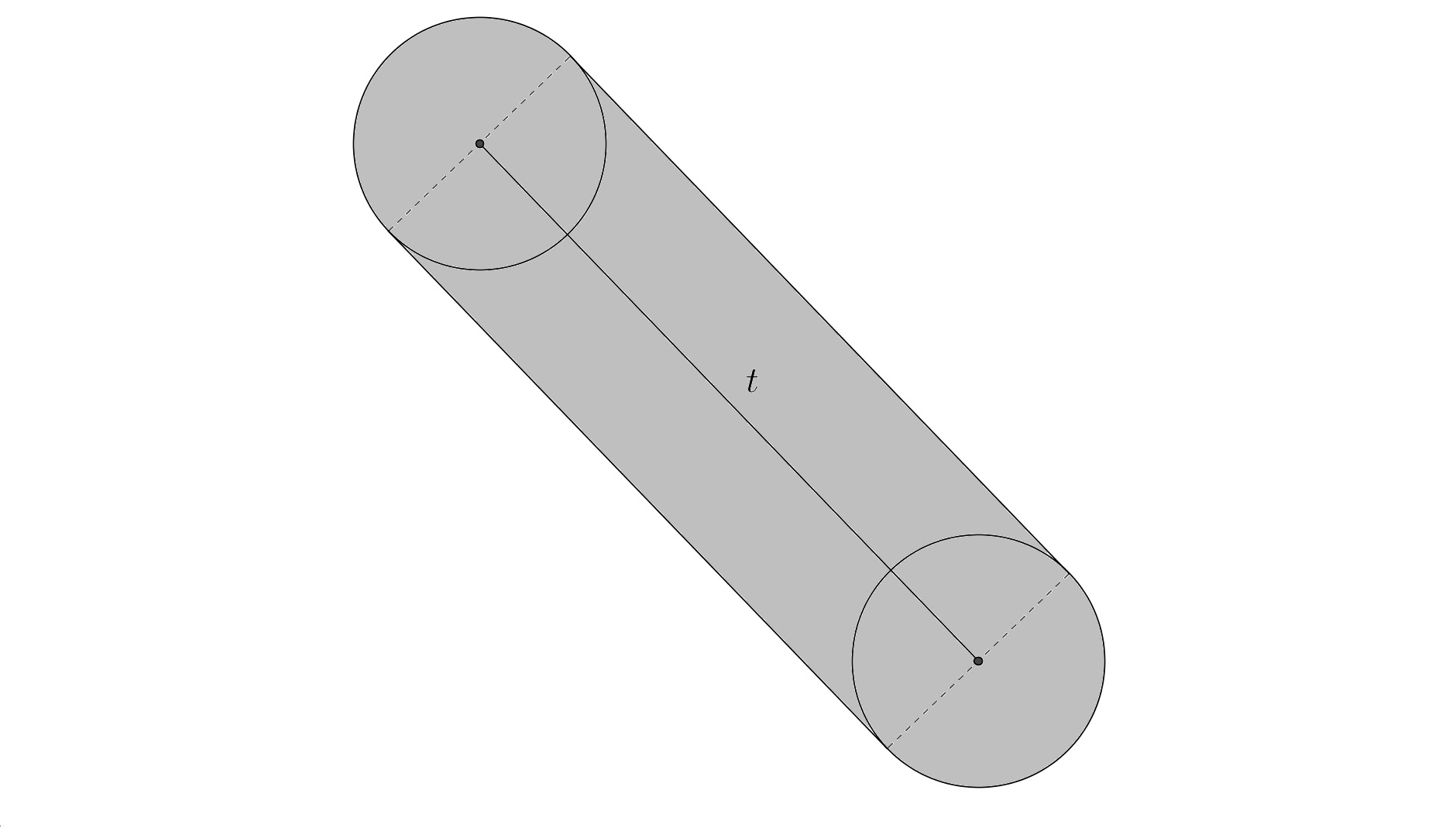}
  \caption{The Euclidean tube-like region $\theta (q, v,t)$ around the free particle trajectory.}
  \label{tubo euclideo}
\end{figure}

Through $\mathcal{M}$ the region $\theta (q,v,t)$ is mapped into the region:
\begin{align}
&\mathcal{M} ( \theta(q,v,t) ) = \biggl \{ w \in \mathbb{H}_2: \inf_ {s \in [0,t]} d_h \bigl ( \,  \mathcal{M}^{-1}(w), \Phi ^{(q,v)}(s) \bigr)<r \biggr \} = \notag \\
& = \biggl \{ w \in \mathbb{H}_2: \inf_ {s \in [0,t]} d_h \bigl ( \, w, \Phi^{(\tilde{q},\tilde{v})}(s)  \bigr)<r \biggr \}  = \theta (\tilde{q}, \tilde{v},t)
\end{align}
where we used that $\mathcal{M}$ is invertible, it preserves distances and the property (\ref{Definition Gamma}), so that the mapped region takes the following simple representation: 
\begin{align} \label{tubo di flusso particolare}
\theta (\tilde{q}, \tilde{v},t)= \bigcup_{0\leq s \leq t} B_r \bigl ( (0,e^s) \bigr ).
\end{align}
Moreover since  $\mathcal{M} $ is an isometry, it preserves areas, whence we can finally compute the desired area as: 
\begin{align} \label{Areas of thetas}
|\theta (q,v,t)|=|\theta (\tilde{q}, \tilde{v},t)|
\end{align}

Now, (\ref{tubo di flusso particolare}) is the region inside the envelope of the following family of curves
\begin{align}
C_{t}= \{ \partial B_r \bigl ((0,s) \bigr ), \ \ 1\leq s\leq e^{t} \}
\end{align}
where $\partial B_r \bigl ((0,s) \bigr )$ has cartesian equation 
\begin{align*}
h(x,y,s)=x^2+(y-s\cosh r)^2-s^2 \sinh ^2 r=0.
\end{align*}
We obtain the envelope of $C_{t}$ by means of the following system
\begin{align}
\begin{cases}
h(x,y,s)=0 \\
\frac{\partial}{\partial s} h(x,y,s)=0
\end{cases}
\end{align}
which gives the union of the following Euclidean lines 
\begin{align}
\label{rette di inviluppo}
y= \frac{x}{\sinh r} \qquad  \qquad y=- \frac{x}{\sinh r}
\end{align}
Thus, the tube $\theta (\tilde{q}, \tilde{v},t)$ is the section of a cone with vertex in $(0,0)$ and central axis the line $x=0$, as shown in figure \ref{tubo iperbolico}. \\
It is now important to observe that  (\ref{rette di inviluppo}) is tangent to $\partial B_r(0,1)$ at the points $A=(\tanh r; \frac{1}{\cosh r})$ and $B=(-\tanh r, \frac{1}{\cosh r})$, and also  tangent to $\partial B_r(0,e^{t})$ at the points $C=(-e^{t}\tanh r; \frac{e^{t}}{\cosh r})$ and $D=(e^{t}\tanh r, \frac{e^{t}}{\cosh r})$.\\
Moreover $A$ and $B$ lie on the geodesic line $x^2+y^2=1$, while $C$ and $D$ lie on $x^2+y^2=e^{2 t}$. This makes it clear that $ \theta (\tilde{q}, \tilde{v},t)$ is composed of three parts: the half-circle below the geodesic segment $AB$,  the intermediate region  $\theta ' (\tilde{q}, \tilde{v},t)$ with vertices $A,B,C,D$ and the half-circle above the geodesic segment $CD$. The hyperbolic area of  $\theta ' (\tilde{q}, \tilde{v},t)$   is defined as 
\begin{align*}
| \theta ' (\tilde{q}, \tilde{v},t)|= \int _{\theta ' (\tilde{q}, \tilde{v},t)  } \frac{dx dy}{y^2}
\end{align*}

\begin{figure}
  \centering
  \includegraphics[scale=0.25]{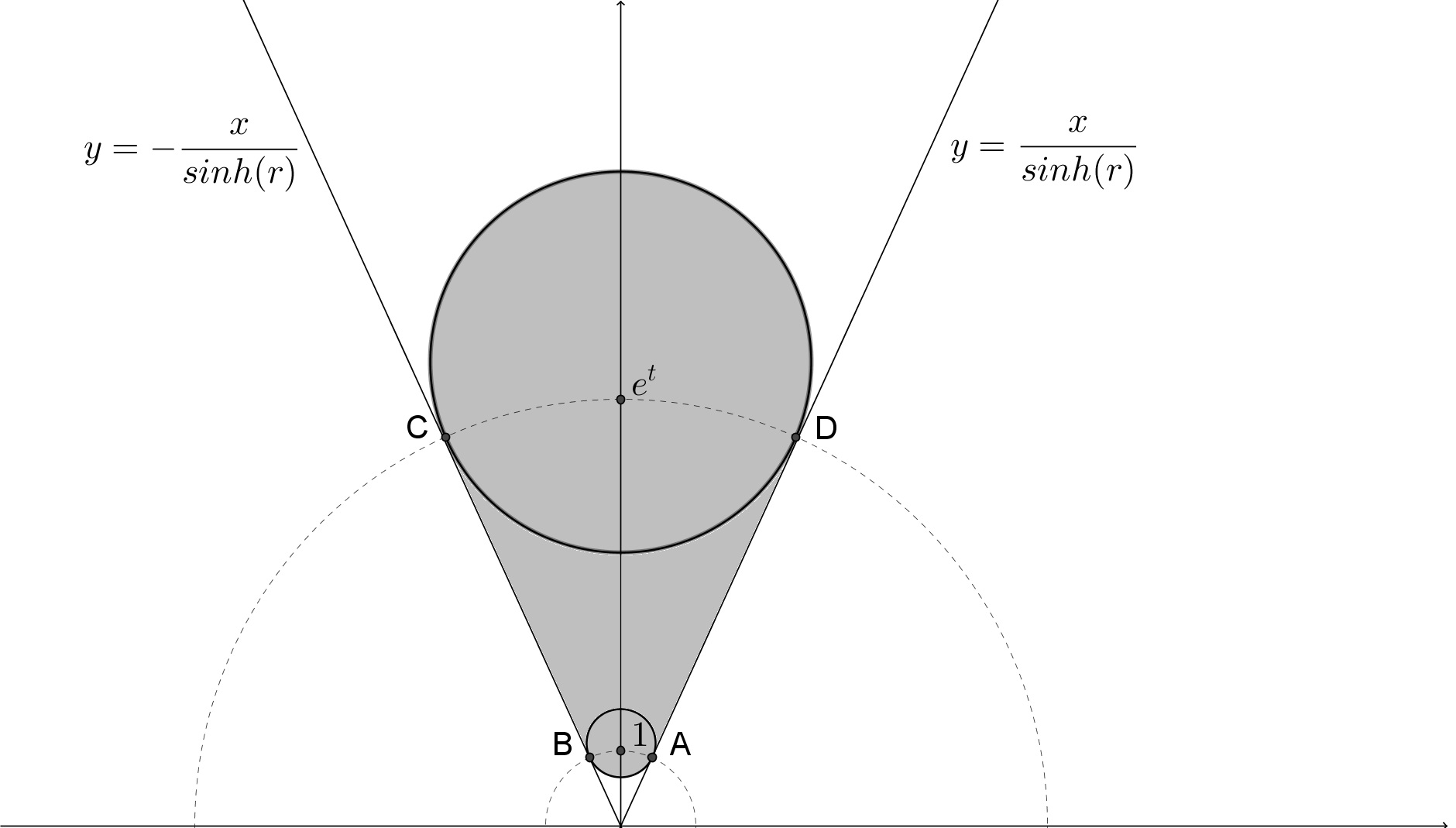}
  \caption{The hyperbolic tube-like region $\theta (\tilde{q}, \tilde{v},t)$ around the free particle trajectory.}
  \label{tubo iperbolico}
\end{figure}

By means of the substitutions $x=\rho\cos \gamma$ and $y=\rho\sin \gamma$ we have that
\begin{align*}
| \theta ' (\tilde{q}, \tilde{v},t)|= \int_1 ^{e^{t}} \int_{\alpha}^{\pi-\alpha} \frac{\rho \,d\rho d\gamma}{(\rho \sin \gamma)^2}= \frac{2 t}{\tan \alpha}
\end{align*}
where $\tan \alpha= \frac{1}{\sinh r}$ is related to the slope of (\ref{rette di inviluppo}). Finally, the area of $\theta (\tilde{q}, \tilde{v},t)$ is given by
\begin{align}
|\theta (\tilde{q}, \tilde{v},t)|= 4\pi \sinh^2 \frac{r}{2}+2t \sinh r
\end{align}

We can now come back to the  probability distribution of the first hitting time $T_{(q,v)}$. As before, we denote by $N(\mathcal{S})$ the number of points inside $\mathcal{S}$. By excluding the possibility that $q$ is located within some obstacles, we thus have
\begin{align*}
&\Pr \ll T_{(q,v)} >t \, | N(B_r(q)) =0  \rr = \frac{\Pr \ll T_{(q,v)} >t \, , \,N(B_r(q)=0\rr }{\Pr \ll N(B_r(q)=0\rr} \\
 &=\frac{e^{-\lambda |\theta (q,v,t)|}} {e^{-\lambda |B_r(q)|}}= e^{-2 \lambda t \sinh r }
\end{align*}
where, in the last equality, we used (\ref{Areas of thetas}) and  (\ref{Area cerchio iperb}).\\
We conclude that $T_{(q,v)}$ has an exponential probability distribution with parameter $2\lambda \sinh r$.
Thus the mean free path, which is a fundamental quantity in what follows, is given by
\begin{align}
\sigma ^{-1} = (2 \lambda \sinh r)^{-1}.
\end{align} 
Performing the same calculation in the Euclidean case leads to say that the free path has an exponential distribution of parameter $2 \lambda r$ and the mean free path is thus $(2 \lambda r)^{-1}$.

\subsection{The main theorem}

The most important result of the present work is the next theorem, where we find a suitable scaling limit corresponding to small obstacles. Among all  possible  scalings, the one consisting in 
\begin{align*}
r \to 0 \qquad \lambda \to \infty \qquad \textrm{in such a way that}\quad 2\lambda \sinh r = \sigma >0,
\end{align*}
that we call hyperbolic Boltzmann-Grad limit in analogy with Gallavotti's work, ensures a non-trivial approximation for $f_r(q,v,t)$. 

Moreover, as a final result, we will also show (see section 3.4) that  the limit function $f(q,v,t)$ is the probability density of a Markovian process, namely a random flight $\{\bigl (Q(t), V(t) \bigr ), t>0 \}$ on the Poincar\'e half plane.
   
\begin{teorema} 
Let $\{(Q_r(t), V_r(t)), t>0\}$ be the Lorentz process in the Poincar\'e half-plane,  defined in such a way that the obstacles are disks of hyperbolic radius $r$, whose centers are distributed as a hyperbolic homogeneous Poisson field with intensity $\lambda = \frac{\sigma}{2\sinh r}$. Let $f_{in} \in  L_{\infty}(\mathbb{H}_2\times S_1)$ be the initial probability density.  Then, in the limit $r \to 0$, the joint density $f_r$ of the Lorentz process converges in $L_1$ sense to some probability density $f$  for each $t>0$. Moreover $f$ solves the following equation
\begin{align} \label{equazione}
\frac{\partial }{\partial t}f(q,v,t)+ \mathcal{D}f(q,v,t)&= -\sigma  f +\sigma \int _0 ^{2 \pi} f(q, R_{\beta}v,t)  \, \frac{1}{4}\sin \frac{\beta}{2}d\beta\\
 f(q,v,0)& = f_{in}(q,v) \notag,
\end{align}
where $\mathcal{D}$ is the operator of covariant differentiation along the geodesic lines and $R_{\beta}$ is the rotation of an angle $\beta$.
 \end{teorema}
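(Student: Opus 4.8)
The plan is to follow the strategy of Gallavotti's Euclidean proof, adapted to the hyperbolic geometry: represent $f_r$ through the backward billiard flow, expand it as a series indexed by the number of collisions, show that only the \emph{Markovian} (genuinely independent-collision) contributions survive the Boltzmann--Grad limit, and identify the surviving sum with the Duhamel series of the kinetic equation (\ref{equazione}). First I would start from
\[
f_r(q,v,t)=\mathbb{E}_{\{c\}}\, f_{in}\bigl(\Psi_{\{c\}}^{(q,v)}(-t)\bigr)
\]
and condition on the number $n$ of obstacles met by the backward trajectory during $[0,t]$ and on their ordered hitting times. Using the Poisson structure together with the facts already established — that the free path $T_{(q,v)}$ is exponential with parameter $2\lambda\sinh r=\sigma$, and that the tube $\theta(q,v,t)$ has area $4\pi\sinh^2\frac r2+2t\sinh r$ — the contribution of trajectories undergoing exactly $n$ collisions factorizes, in the limit, into exponential free-path factors $e^{-\sigma t}$, a rate $\sigma$ per collision, geodesic-flow segments between collisions, and a single-obstacle scattering kernel at each collision. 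Invariance of the homogeneous Poisson field under the M\"obius group (\ref{Moeb}) guarantees that this kernel and the rate $\sigma$ do not depend on the current state $(q,v)$: the same standardizing isometry $\mathcal{M}$ used for the tube computation reduces every local collision analysis to a vertical geodesic hitting a disk.

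Second, I would compute the scattering kernel from a single obstacle. Because hyperbolic angles coincide with Euclidean ones in $\mathbb{H}_2$ and the local geometry near an obstacle becomes Euclidean as $r\to0$, specular reflection off a hyperbolic disk reduces to reflection off a Euclidean circle: writing the hyperbolic impact parameter as $b=\sinh r\,\cos\frac{\beta}{2}$, with $\beta$ the deflection angle, and using that the Boltzmann--Grad limit makes $b$ \emph{uniform} across the cross-section of width $2\sinh r$, the Jacobian $|db|=\tfrac12\sinh r\,\sin\frac{\beta}{2}\,d\beta$ against the uniform law $db/(2\sinh r)$ produces the normalized density $\tfrac14\sin\frac{\beta}{2}\,d\beta$ on $\beta\in[0,2\pi]$. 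This is exactly the kernel on the right-hand side of (\ref{equazione}); here the diverging density $\lambda$ and the vanishing width $\sinh r$ combine precisely into the finite collision rate $\sigma=2\lambda\sinh r$, which is the reason the chosen scaling is the correct one.

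Third, I would assemble the surviving terms into the series
\[
f(q,v,t)=\sum_{n=0}^{\infty} e^{-\sigma t}\,\sigma^{n}\!\!\int\limits_{\substack{s_1+\dots+s_{n+1}=t\\ s_i\ge0}}\!\!\bigl(T_{s_{n+1}}K\,T_{s_n}\cdots K\,T_{s_1}f_{in}\bigr)(q,v)\,ds,
\]
where $T_s=e^{-s\mathcal{D}}$ is the geodesic-flow (free-transport) semigroup and $Kg(q,v)=\int_0^{2\pi}g(q,R_\beta v)\,\tfrac14\sin\frac{\beta}{2}\,d\beta$. Since $T_s$ is a measure-preserving isometry of $L_1(\mathbb{H}_2\times S_1)$ and $K$ is a stochastic kernel (as $\int_0^{2\pi}\tfrac14\sin\frac{\beta}{2}\,d\beta=1$), the $n$-th term has $L_1$ norm at most $e^{-\sigma t}\sigma^{n}\tfrac{t^{n}}{n!}\,\|f_{in}\|_{L_1}$, so the series converges absolutely, defines a probability density, and solves (\ref{equazione}) by termwise differentiation — the loss term $-\sigma f$ coming from the exponential weights and the gain term $\sigma Kf$ from the coupling between consecutive orders. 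The uniform $L_\infty$ bound on $f_{in}$, preserved by the flow, then provides the domination needed to match this limit series term-by-term with the collision expansion of $f_r$ and conclude $L_1$ convergence for each fixed $t>0$.

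The main obstacle, exactly as in the Euclidean theory, is the error control underlying the first two steps: I must show that the \emph{non-Markovian} contributions vanish as $r\to0$. These are the configurations in which the backward trajectory recollides with an obstacle it has already visited, or in which two obstacles overlap or shadow one another, so that successive free paths and scattering events fail to be independent. The required estimate is that the total hyperbolic measure of obstacle centers producing such correlated events is $O(r)$, equivalently $O(\sigma/\lambda)$, uniformly on $[0,t]$, and hence vanishes in the limit. I expect the hyperbolic setting to demand genuinely new geometric input here, since the curved geodesic tubes — which, as shown above, are bounded by \emph{non-geodesic} equidistant curves and open up conically under $\mathcal{M}$ — make the recollision geometry differ from the Euclidean one; estimating the measure of interfering configurations along these conical tubes, uniformly in the base point via the isometry group, is where the bulk of the technical work will lie.
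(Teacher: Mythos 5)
Your architecture for the positive part of the proof coincides with the paper's: the backward-flow representation of $f_r$, the collision expansion with Poisson weights, the split into Markovian and recollision contributions, the passage to flight times and deflection angles, and the identification of the limit series with the Duhamel expansion of (\ref{equazione}) generated by $A=-\sigma-\mathcal{D}$ and the collision operator. Your impact-parameter derivation of the kernel $\frac{1}{4}\sin\frac{\beta}{2}\,d\beta$ is equivalent to the paper's computation, with one useful difference in emphasis: the paper proves, via the standardizing M\"obius isometry, that the change of variables $dc_j=\frac{1}{2}\sinh r\,\sin\frac{\beta_j}{2}\,d\tau_j\,d\beta_j$ holds \emph{exactly} at finite $r$ (explicit formulas for the mapped center $(\tilde{x}_C,\tilde{y}_C)$ in terms of $(\tau,\beta)$, with Jacobian computed in closed form), so you do not need the heuristic that the local geometry ``becomes Euclidean as $r\to0$'' or that the impact parameter is uniform only in the limit — the uniformity is built into the Poisson measure and the Jacobian identity is exact.

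The genuine gap is your last step. You defer the vanishing of the non-Markovian contribution to future ``genuinely new geometric input'' — estimates of $O(r)$ measure for recolliding or interfering configurations along the curved conical tubes, uniformly via the isometry group — and without these your proposal is not a complete proof. The paper shows that no such geometric estimate is needed: the argument is soft. Since the billiard flow preserves total probability, $\int_{\mathbb{H}_2\times S_1} f_r\,dq\,dv=1$ for every $r>0$; the Markovian series is dominated termwise by $\|f_{in}\|_{L_\infty}(\sigma t)^n/n!$, so $f_r^M\to f$ in $L_1$ with $\int_{\mathbb{H}_2\times S_1} f\,dq\,dv=1$; hence $\int_{\mathbb{H}_2\times S_1} f_r^{REC}\,dq\,dv=1-\int_{\mathbb{H}_2\times S_1} f_r^M\,dq\,dv\to 0$, and since $f_r^{REC}\geq 0$ (being an integral of the nonnegative $f_{in}$ over recolliding configurations), this is precisely $\|f_r^{REC}\|_{L_1}\to 0$. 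The triangle inequality $\|f_r-f\|_{L_1}\leq\|f_r^M-f\|_{L_1}+\|f_r^{REC}\|_{L_1}$ then finishes the convergence claim. So the ``bulk of the technical work'' you anticipate is avoidable: conservation of mass plus nonnegativity of the recollision term replaces all pathwise recollision geometry, and this observation (or, failing it, the hard estimates you only sketch) is exactly what your proposal is missing.
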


 \begin{proof}
It is possible to write explicitly $f_{r}(q,v,t)$. Of course, $f_r(q,v,t)= f_{in}(q,v)$ if $q$ lies inside any obstacle. The following calculations are made on condition that no obstacle center lies inside the ball of hyperbolic radius $r$ around $q$.

Let us suppose that the particle state at time $t$ is given by  $(q,v)$ and consider the backward trajectory. We denote by $N_{(q,v)}(t)$ the number of collisions which occurred up to time $t$.
From section 3.2, it is clear that 
\begin{align}
\Pr \ll N_{(q,v)}(t)=0 \rr = \Pr \ll T_{(q,v)}>t \rr =e^{-2 \lambda t \sinh r}.
\end{align}
Instead, the probability that the particle  collides exactly $n$ obstacles whose centers are located in the infinitesimal hyperbolic areas $dc_1,....dc_n$ around $c_1,....c_n$ is 
\begin{align}
\Pr \ll N_{(q,v)}(t)=n, C_1 \in dc_1, \dots , C_n \in dc_n \rr = \lambda ^n  e^{-\lambda |\theta _{ \{c \}  } (q,v,t)|}dc_1 \cdots dc_n
\end{align}
where $\theta _{ \{c \}  }(q,v,t) $ is the tube of hyperbolic  width $2r$ around the particle trajectory, namely the tube-like region swept by an ideal obstacle when its center is moved along the path. While in the Euclidean case this region is simply given by a non disjoint union of rectangles, it here has a more complex shape and its hyperbolic area can be estimated as  
\begin{align}
|\theta _{\{c\}}(q,v,t)| = 2 t\sinh r  + o(\sinh r).
\end{align}
The  particle density (\ref{densita lorentz scritta col valore atteso}) can be written as
\begin{align}
f_r(q,v,t)& = f_{in}(\Psi ^{(q,v)}(-t)) e^{-2\lambda t\sinh r}\notag \\
&+\sum _{n=1}^{\infty} \int _{A_{q,v}^n} f_{in} (\Psi ^{(q,v)}_{(c_1 \cdots c_n)}(-t)) \lambda ^n  e^{-\lambda |\theta _{\{c\}}(q,v,t)|}dc_1 \cdots dc_n
\end{align}
where $\mathit{\Psi}$ denotes the billiard flow defined in (\ref{billiard flow}) and   $ A_{q,v}^n $ is the subset of $B_{t+r}^n(q)$, containing all the obstacles configurations such that the backward trajectory with initial state $(q,v)$ collides with the $n$ obstacles centered at $c_1 \cdots c_n$.

Following Gallavotti's proof, we observe that among all the possible configurations of obstacles, there are some such that the trajectory hits each obstacle at most once, and there are others that lead to recollisions. Thus we can split $f_r$ into two components, that we respectively call the "Markovian" and the "recollision" terms:
\begin{align}\label{scomposizione in parte Markoviana e non}
f_r(q,v,t) = f^M_{r}(q,v,t)+ f^{REC}_{r}(q,v,t).
\end{align}
We now restrict our attention to the Markovian term $f_r^M(q,v,t)$. By considering the backward evolution, let $\tau _1 \cdots \tau _n$ be the collision times, such that
\begin{align}
0< \tau _1 < \tau _2 \cdots <\tau _n<t 
\end{align}
 and  $\beta _1 \cdots \beta _n$ be the corresponding deflection angles. Then, for a given particle trajectory, there is a one-to one correspondence between  the $2n $ variables $c_1 \cdots c_n$ and the  $2n$ variables $\tau _1 \cdots \tau _n, \beta _1 \cdots \beta _n$.

\begin{figure}
  \centering
  \includegraphics[scale=0.28]{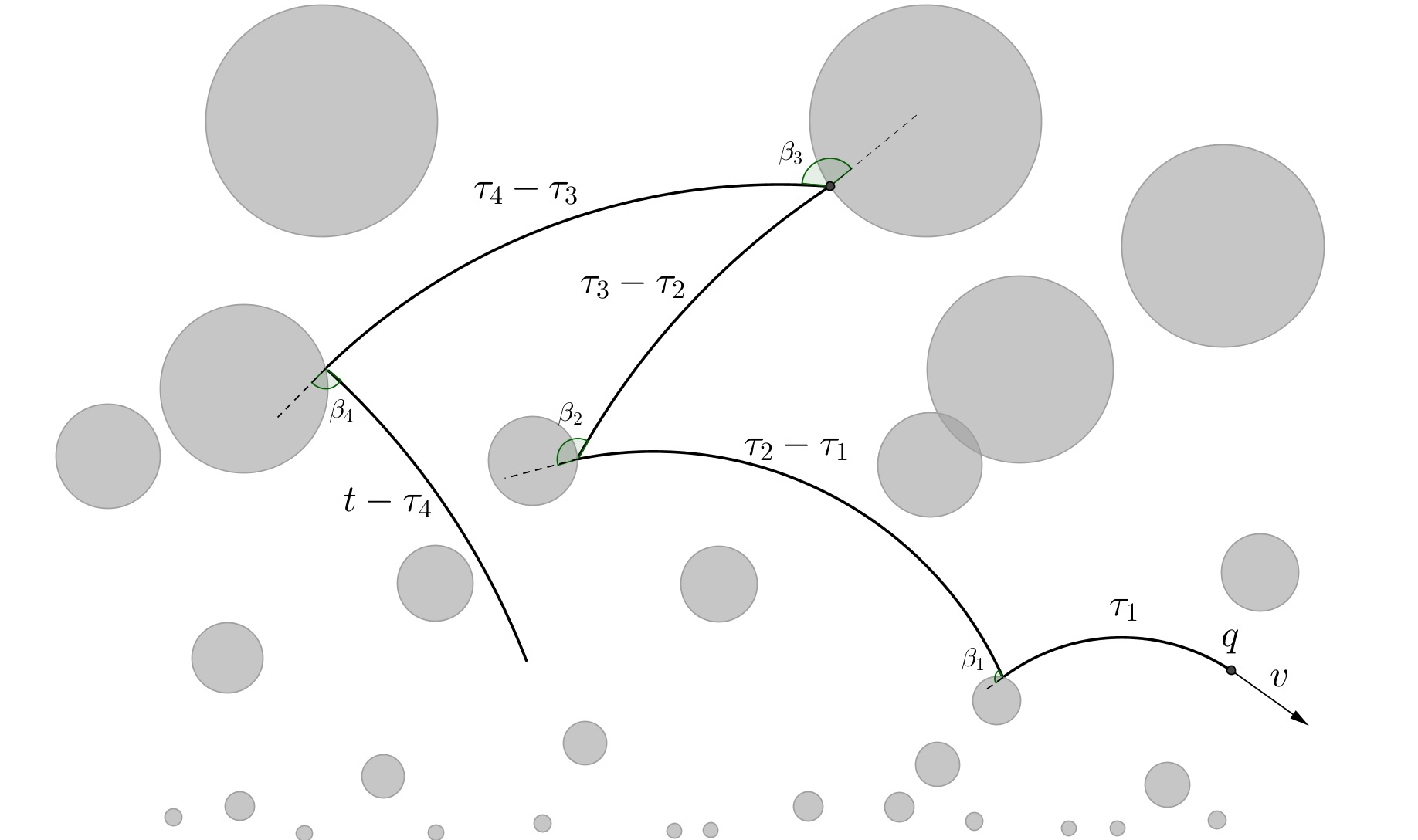}
  \caption{A typical backward trajectory with four collisions, starting from the state $(q,v)$ at time $t$.}
  \label{figura traiettoria}
\end{figure}

We now  obtain the transformation
\begin{align}
dc_1... dc_n = \frac{1}{2^n} \sinh ^n (r) \sin \frac{\beta _1}{2}... \sin \frac{\beta _n}{2} d\tau _1...d\tau _n d\beta _1... d\beta _n
\end{align}
which is a crucial point to prove the theorem.
For simplicity, we treat the case $n=1$, the general case  follows immediately by carrying out tedious calculations.  We explain how to determine the center  $C=(x_C, y_C)$ of the obstacle in terms of the flight time $\tau$ and the deflection angle $\beta$.

Let the particle be at position  $q$ and  unit velocity  $v$  at time $t$; we are interested in its  backward evolution . After a time $\tau$ the particle collides with the first obstacle, namely a circle centered at $C=(x_C, y_C)$, and it is reflected  in a trajectory forming an angle  $\beta$  with the incoming trajectory  as shown in figure \ref{figura riflessione}. \\
In order to perform the calculation we first employ the M\"obius transformation (\ref{Definition Gamma}).
In this way the particle backward trajectory is mapped into 
$\Phi^{(\tilde{q},\tilde{v})}(s)=(0,e^{s})$ for $s \in [0,\tau)$; besides,  the obstacle centered at $C=(x_C, y_C)$ is mapped into an obstacle centered at
 $\tilde{C}=(\tilde{x_C}, \tilde{y_C})$ of the same hyperbolic radius\footnote{Since  $\mathcal{M}$ preserve distances, it sends circumferences into circumferences.}.\\
The traveled time $\tau$ is preserved under an isometry, thus the  point of impact is mapped into $(0, e^{\tau})$; the deflection angle $\beta$ is also preserved since  $\mathcal{M}$ is  conformal (as it is an isometry). These facts will be of great importance in the following calculation.\\
Finally we are interested in the area element $dc$ that can be computed as:
\begin{align}
dc= d \tilde{c}= \frac{d \tilde{x_C} \, d \tilde{y_C} }{\tilde{y_C}^2 }
\end{align}

In this setting, as can be clearly seen in figure  \ref{figura riflessione}, one can  ideally "reach" the obstacle center $\tilde{C}=(\tilde{x_C}, \tilde{y_C})$ from the collision point $(0,e^{\tau})$; it is sufficient to rotate the unit velocity of an angle $\gamma= \frac{\pi}{2}- \frac{\beta}{2}$ and to travel along a path of hyperbolic length $r$. Thus, by using  (\ref{evoluzione senza collisioni}) we have

\begin{align}
\tilde{x_C}& = -\frac{e^{\tau } \sinh r \cos \frac{\beta}{2}}{\cosh r - \sin \frac{\beta}{2} \sinh r}\\
\tilde{y_C}& = \frac{e^{\tau}}{\cosh r - \sin \frac{\beta}{2}\sinh r}.
\end{align}

The Jacobian determinant of the trasformation $(\tilde{x_C}, \tilde{y_C}) \to (\tau, \beta)$ is given by 
\begin{align}
\det J= \frac{1}{2} \frac{e^{2 \tau} \sinh r \sin \frac{\beta }{2} }{(\cosh r - \sin \frac{\beta}{2} \sinh r)^2}
\end{align}
and therefore the  infinitesimal surface element is 
\begin{align}
d\tilde{c}= \frac{d \tilde{x_C} \, d \tilde{y_C} }{\tilde{y_C}^2 }= \frac{1}{2} \sinh r \, \sin \frac{\beta}{2}\,  d\beta d\tau .
\end{align}
as desired.  Thus the Markovian term can be then written as

\begin{align}
f^M_{r}(q, v,t)& =  f_{in}(\Psi ^{(q,v)}(-t)) e^{- 2 \lambda t \sinh r  }+ 
\sum _{n=1}^{\infty} \int _ 0^t d \tau _1 \int _{\tau _1}^t d \tau _2 .... \int _{\tau _{n-1} }^t d \tau _n \notag \\
& \int _{[0,2\pi]^n}d\beta _1 .... d \beta _n \,  f_{in} (\Psi _{\tau _1 \dots \tau _n , \beta _1 \dots \beta _n}^{(q,v)}(-t)) \notag \\
& e^{-\lambda |\theta _ {\{c\}} (q,v,t)|}
  \frac{\lambda ^n}{2^n}  \sinh ^n r \sin \frac{\beta _1}{2}... \sin \frac{\beta _n}{2}
\end{align}.

Observe that the general term in the summation is bounded by
\begin{align*}
||f_{in}||_{L_{\infty}} \frac{(\sigma t)^n}{n!}
\end{align*}
which is the $n^{th}$ term of an exponential converging series.

Thus, in force of the dominated convergence theorem for the series, passing to the limit as  $r \to 0$, together with the assumption $\lambda = \frac{\sigma}{2\sinh r}$, we have that $f_r^M(q,v,t)$ converges pointwise to 

\begin{align} \label{densita limite}
f(q, v,t)= & f_{in}(\Psi ^{(q,v)}(-t))  e^{- \sigma \, t}+ 
 e^{-\sigma t}\sum _{n=1}^{\infty} \sigma ^n \int _ 0^t d \tau _1 \int _{\tau _1}^t d \tau _2 .... \int _{\tau _{n-1} }^t d \tau _n \notag \\
&  \int _{[0,2\pi]^n}d\beta _1 .... d \beta _n \,  \frac{1}{4^n} \sin \frac{\beta _1}{2}... \sin \frac{\beta _n}{2}  f_{in} (\Psi _{\tau_1... \tau _n, \beta _1...\beta_n} ^{(q,v)}(-t))  
\end{align}
and the limit clearly holds also in $L_1$ sense.

By easy calculations we can  check  that (\ref{densita limite}) is a probability density, since
\begin{align} \label{integrazione fm}
||f||_{L_{1}}=\int _{\mathbb{H}_2 \times S_1} f (q,v,t) dqdv=1
\end{align}
and furthermore it is bounded in the following way:
\begin{align} \label{f_M limitata}
||f||_{L_{\infty}} \leq ||f_{in}||_{L_{\infty}}
\end{align}
From (\ref{scomposizione in parte Markoviana e non}) we have
\begin{align} \label{integrale della scomposizione}
\int _{\mathbb{H}_2 \times S_1}f_r(q,v,t)dqdv = \int _{\mathbb{H}_2 \times S_1} f^M_{r}(q,v,t)dqdv+\int _{\mathbb{H}_2 \times S_1} f^{REC}_{r}(q,v,t)dqdv
\end{align}

Passing to the Boltzmann-Grad limit, it is clear that the left hand side of (\ref{integrale della scomposizione}) is equal to $1$, because the billiard flow obviously leaves the total probability mass invariant. Moreover,  conditions (\ref{integrazione fm}) and (\ref{f_M limitata}) and the dominated convergence theorem lead to 

\begin{align}
1=1+ \lim _{r \to 0} \int _{\mathbb{H}_2 \times S_1} f^{REC}_{r}(q,v,t)dqdv
\end{align}
The contribution of the recollision term thus vanishes in $L_1$-norm, namely the measure of all the pathological paths goes to zero. In the end, by writing (\ref{scomposizione in parte Markoviana e non}) as
\begin{align}
f_r(q,v,t)-f(q,v,t)= f_r^M(q,v,t)-f(q,v,t)+f_r^{REC}(q,v,t)
\end{align}
 immediately follows that
\begin{align}\label{parte finale}
||f_r(q,v,t)-f(q,v,t)||_{L_1} \leq ||f_r^M(q,v,t)-f(q,v,t)||_{L_1}+||f_r^{REC}(q,v,t)||_{L_1}
\end{align}
Under the Boltzmann- Grad limit, the right side of (\ref{parte finale}) vanishes and the proof of convergence is complete.

To prove that (\ref{densita limite}) is a solution to (\ref{equazione}), we need to define two kinds of operators, which are bounded in the norm $||.||_{L_{\infty}}$. The first one is the semigroup of geodesic transport with damping, given by
\begin{align*}
T_t f(q,v) = e^{-\sigma t} f(\Phi^{(q,v)}(-t), V^{(q,v)}(-t))
\end{align*}
which is generated by
\begin{align*}
A= -\sigma -\mathcal{D} 
\end{align*} 
where $\mathcal{D}$ denotes the operation of differentiation along the curve $(\Phi ^{(q,v)}(t), V^{(q,v)}(t))$ lying in $ \mathbb{H}_2 \times S_1$.
The second one is the collision operator:
\begin{align*}
L f(q,v)= \sigma \int _0^{2\pi} f(q,R_{\beta}v)\frac{1}{4} \sin \frac{\beta}{2} d \beta.
\end{align*}
Thus (\ref{densita limite}) can be written as
\begin{align*}
f(q,v,t)= T_t f_{in}(q,v)+ \sum _{n=1}^{\infty} \int _{0 \leq s_1 \leq ... \leq s_n \leq t} T_{t-s_n}LT_{s_n-s_{n-1}}...LT_{s_1}f_{in}(q,v) ds_1...ds_n
\end{align*}
which is the Duhamel expansion giving (see \cite{kolokoltsov}, page 52) the solution to
\begin{align*}
\frac{\partial}{\partial t} f=(A+L)f
\end{align*}
and this concludes the proof.
\end{proof}

\subsection{The limit Markovian random flight.}

We now show that a Markovian transport process $\{ (Q(t), V(t)),t>0 \}$ exists whose finite one-dimensional distribution is given by (\ref{densita limite}). The rigorous construction of a process of this kind can be carried out by following the same steps as Pinsky \cite{Pinsky}, who defined a random flight on the tangent
bundle of a generic Riemannian manifold. We assume that $Q(0)=\tilde{q}$ and $V(0)=\tilde{v}$ almost surely, the case of distributed initial data is an immediate consequence.   We denote such a  process by $\bigl (Q^{	(\tilde{q},\tilde{v})}(t), V^{(\tilde{q},\tilde{v})}(t) \bigr )$ and we outline its construction.
Let us consider a sequence of independent waiting times $e_j$, $j\geq 1$, having distribution
\begin{align*}
\Pr \{ e_j>\eta \}= e^{-\sigma \eta} \qquad \eta >0.
\end{align*}
 A particle moves along the geodesic lines in $\mathbb{H}_2$ and changes direction at Poisson times
\begin{align}
\tau _n = e_1+e_2+ \cdots e_n \qquad n \geq 1.
\end{align}

For $0\leq t \leq \tau _1$ we have
\begin{align}
Q^{(\tilde{q},\tilde{v})}(t)= \Phi ^{(\tilde{q},\tilde{v})}(t) \qquad \qquad V^{(\tilde{q},\tilde{v})}(t)= \frac{\dot{\Phi}^{(\tilde{q},\tilde{v})}(t)}{||\dot{\Phi}^{(\tilde{q},\tilde{v})}(t)||},
\end{align}
where $\Phi$ is the geodesic flow defined in (\ref{evoluzione senza collisioni}).
The random point where the first deflection occurs and the corresponding  post-collisional velocity are respectively given by
\begin{align}
Q_1= \Phi^{(\tilde{q},\tilde{v})}(\tau _1) \qquad \qquad V_1= R_{(\beta _1)} \bigl [V^{(\tilde{q},\tilde{v})} (\tau _1^-)\bigr ]
\end{align}
where $R_{(\beta)}$ denotes the rotation of an angle $\beta$. Proceeding recursively, the process is such that, for $ \tau _n \leq t \leq \tau _{n+1}$

\begin{align}
Q^{(\tilde{q},\tilde{v})}(t)= \Phi ^{(Q_n,V_n)}(t-\tau _n) \qquad \qquad V^{(\tilde{q},\tilde{v})}(t)= \frac{\dot{\Phi}^{(Q_n,V_n)}(t-\tau _n)}{|\dot{\Phi}^{(Q_n,V_n)}(t-\tau _n)|}
\end{align}
where the random point of the $j^{th}$  deflection  and the $j^{th}$ post-collisional velocity are denoted by
\begin{align}
Q_j= \Phi^{(Q_{j-1},V_{j-1})}(e_j) \qquad \qquad V_j= R_{(\beta _j)} \bigl [V^{(Q_{j-1},V_{j-1})} (e _j^-)\bigr ]
\end{align}
A crucial point of the construction is that the deflection angles $\beta _j$  are independent of the Poissonian times and among themselves; they have common distribution
\begin{align*}
\Pr \{ \theta_j\in d \beta \}= \frac{1}{4}\sin \frac{\beta}{2}d\beta \qquad \beta \in [0,2\pi] 
\end{align*}
and this is consistent with the cross section due to the collision with a hard sphere.

In the case where the process has initial density $f_{in}(q,v)$, the single-time density of $\bigl \{ (Q(t), V(t)),t>0 \bigr \}$ is just given by (\ref{densita limite}) and it can be written as
\begin{align*}
f(q,v,t)= \tilde{T}_t f_{in} (q,v)= \mathbb{E} \{ f_{in} \bigl ( Q^{(q,v)}(t), V^ {(q,v)}(t)  \bigr )  \}
\end{align*}
where $\{\tilde{T}_t\, , t>0\}$ is a strongly continuous contraction semigroup on the space of differentiable and bounded functions on $\mathbb{H}_2 \times S_1$, endowed with the norm $|| . ||_{L{\infty}}$. 
Then, following the same steps of Pinsky \cite{Pinsky}, we can write the generator of $\tilde{T}_t$: the limit density (\ref{densita limite}) satisfies the following linear Boltzmann-type differential equation (obviously coinciding with \ref{equazione})
\begin{align*}
\frac{\partial }{\partial t}f(q,v,t)+ \mathcal{D}f(q,v,t)= \sigma \int _0 ^{2 \pi} \bigl (f(q, R_{\beta}v,t)-f(q,v,t) \bigr ) \, \frac{1}{4}\sin \frac{\beta}{2}d\beta
\end{align*}
where $\mathcal{D}$ denotes the operator of covariant  differentiation along a geodesic line and $R_{\beta}$ is the  rotation of an angle $\beta$ .

%\section{The Lorentz process on the Poincar\'e disk}
%In the present section we show that our results can be  extended without a lot of effort to another model of hyperbolic geometry, namely the Poincar\'e disk. It is the region $\mathbb{D_2}=\{(x,y)\in \mathbb{R}^2,x^2+y^2<1\}$ endowed with the metric
%\begin{align*}
%ds^2= \frac{4(dx^2+dy^2)}{1-x^2-y^2}
%\end{align*}

%Our argument is based on the fact that the Poincar\'e half-plane and the Poincar\'e disk are isomorphic. The isomorphism is given by the Cayley transform:
%A conformal mapping converts the Poincar\'e half-plane in the Poincar\'e disk. Indeed,
%a point $(x,y) \in \mathbb{H}_2$ is mapped into the point $(u,v) \in \mathbb{D_2}$ with coordinates 
%\begin{align*}
%u= \frac{2x}{x^2+(y+1)^2} \qquad v= \frac{x^2+y^2-1}{x^2+(y+1)^2}.
%\end{align*}
%while the inverse mapping reads
%\begin{align*}
%x=\frac{2u}{u^2+(1-v)^2} \qquad y= \frac{1-(u^2+v^2)}{u^2+(1-v)^2}
%\end{align*}
%In particular, the image of the $x$-axis of $\mathbb{H}_2$ is the border of $\mathbb{D_2}$. Moreover, the Cayley transform is conformal, namely it preserves angles and maps geodesic lines into geodesic lines, hyperbolic circles into hyperbolic circles. 
%In particular, if a geodesic line in $\mathbb{H}_2$ is represented by an euclidean half circles with center $(x_0,0)$ and radius $r$, its image is given by an arc of circumference which is orthogonal to the border od $\mathbb{D_2}$, having center at
%\begin{align*}
%\biggl(\frac{2x_0}{x_0^2-r^2+1}, \frac{x_0^2-r^2-1}{x_0^2-r^2+1} \biggr )
%\end{align*}
%and radius $R$ such that $R^2= (\frac{2r}{x_0^2-r^2+1})^2$.

\subsection{Final remarks}

In this paper we studied the Lorentz process and the related Boltzmann-Grad limit on a classical model of hyperbolic geometry, namely the Poincar\'e half-plane.
It is not straightforward to show that our results  hold  in any hyperbolic space. 
Some problems could arise, for example, when calculating  the mean free path, as this is based on the knowledge of the volume of the tube-like regions.

More precisely, it would be interesting to study the Lorentz process on another well-known hyperbolic manifold, namely the Poincar\'e disk $\mathbb{D}_2$, which we recall to be defined as the set $\mathbb{D}_2 = \{ (x,y) \in \mathbb{R}^2: x^2+y^2<1 \}$ endowed with the metric 
\begin{align}
ds^2= 4 \frac{dx^2+dy^2}{(1-x^2-y^2)^2}.
\end{align}
Equivalently, $\mathbb{D}_2$ can be defined as the complex domain $\mathbb{D}_2 = \{ z \in \mathbb{C}: |z|<1 \}$ with infinitesimal arc-length given by $\frac{2|dz|}{1-|z|^2}$.
We believe that the isomorphism between $\mathbb{H}_2$ and $\mathbb{D}_2$ could be employed to study the Lorentz process in this space.
%Moreover, we would like to underline that it is essential that the hyperbolic angles coincide with the angles measured by an Euclidean observe; this assumption holds on the Poincar\'e half-plane, yet it is false, for example, in the celebrated Klein disc model, where a study of a Lorentz process should not be at all easy (for a model of random motions on the Klein disk see, for example, \cite{kelbert}).
%We imagine, instead, that an extension of our construction could be done 
%with a bit less effort    on the celebrated Poincar\'e disk model, which is defined as the set $\mathbb{D}_2 = \{ (x,y) \in \mathbb{R}^2: x^2+y^2<1 \}$ endowed with the metric
 The isomorphism 
is given by the Cayley transform $\mathcal{K} : \mathbb{H}_2 \to \mathbb{D}_2$ that maps a point $ z \in \mathbb{H}_2$  into a point $ w \in \mathbb{D}_2$: 
\begin{align}
w =\frac{i z +1}{z+i}.
\end{align}
%\begin{align}
%u= \frac{2x}{x^2 + (y + 1)^2} \qquad
%v= \frac{x^2 + y^2 -1}
%{x^2 +(y + 1)^2}
%\end{align}
%The transform $\mathcal{K}$ acts sending the $x$ axis of $\mathbb{H}_2$  into the border of $\mathbb{D}_2$.
%In particular, if a 
If a geodesic line in $\mathbb{H}_2$ is represented by an euclidean half circle with center $(x_0,0)$ and radius $r$, its image through $\mathcal{K}$ is given by an arc of circumference which is orthogonal to the border of $\mathbb{D}_2$, having center at
$\biggl(\frac{2x_0}{x_0^2-r^2+1}, \frac{x_0^2-r^2-1}{x_0^2-r^2+1} \biggr )$
and radius $R$ such that $R^2= (\frac{2r}{x_0^2-r^2+1})^2$.

Now, although $\mathcal{K}$ represents a contraction of the half-plane into the unitary disk, it is a conformal transform, namely it leaves angles between geodesic lines invariant  and we saw how the measure of scattering angles plays a fundamental role in the 
study of the process.
The previous observation, together with the fact that $\mathcal{K}^{-1}$ sends geodesic lines in $\mathbb{D}_2$ into geodesic lines in $\mathbb{H}_2$ suggests that a suitable use of the Cayley transform could be the main tool in the study of the Lorentz process in $\mathbb{D}_2$.

We remark that the assumption that hyperbolic angles coincide with the angles measured by an Euclidean observer doesn't hold for example in the  Klein disc model for hyperbolic space (for random motions with branching on the Klein disk see, for example, \cite{kelbert}).
  
  Finally, we   would like to state  that we did not define the most general Lorentz process on the hyperbolic half-plane. One could investigate, for example, the case of randomly moving obstacles (as Desvillettes and Ricci did in \cite{Desvillettes} in an Euclidean context): it would be reasonable to assume that each obstacle moves with a fixed hyperbolic velocity, following a Gaussian distribution. 

Another line of research could be the analysis of the Lorentz model when other boundary conditions are assumed, e.g. the particle could be re-emitted with a stochastic law  instead of being specularly reflected by the obstacles.

\section{Appendix. Piecewise geodesic motion on the hyperbolic half-plane.}

Before writing an explicit expression for the geodesic flow in the Poincar\'e half-plane, we recall the corresponding one in the Euclidean context.  If a particle starts at  $q \in \mathbb{R}^2$ with velocity $v$ and is not subject to collisions, the position at time $t$ is well known to be equal to
\begin{align} \label{moto rettilineo uniforme}
\Phi ^{(q,v)} (t) = q+vt
\end{align}
On the other hand, let $\tau _1 \dots \tau _n$ be the hitting times, with
\begin{align}
0<\tau _1 <\tau _2 < \dots <\tau _n <t
\end{align}
and let $\beta _1 \dots \beta _n$ be the corresponding deflection angles. The position of  the particle at time $t$, that we call $\Phi ^{(q,v)} _n (t)$ for simplicity of notation, can be written as
\begin{align*}
\Phi _n ^{(q,v)} (t) = q+v \tau _1+v_1 \, (\tau _2-\tau _1) + \dots + v_{n-1}(\tau _n-\tau _{n-1}) +v_n(t-\tau_n)
\end{align*}

with
 \begin{align*}
v_{j}= R_{\beta _{j}}v_{j-1} \qquad \qquad v_0=v
\end{align*}
 $R_{\beta _j}$ representing the matrix of rotation of an angle $\beta _j$.

We now consider a particle starting at $q=(x_0,y_0)\in\mathbb{H}_2$ with velocity $v=(\cos \alpha ,\sin \alpha$). Suppose that the particle moves along the geodesic line with hyperbolic velocity of intensity $1$. 
 Then, the position of the particle at time $t$ is given by
\begin{align} \label{evoluzione senza collisioni}
\Phi ^{(q,v)} (t)=\begin{pmatrix}
 x(t) \\  y(t)
\end{pmatrix} = \begin{pmatrix}
 x_0+ y_0 \frac{\sinh t  \cos \alpha}{\cosh t- \sin \alpha \sinh t}  \\
  \frac{y_0}{\cosh t-\sin \alpha \sinh t} \end{pmatrix}
\end{align}
%We recall that $(t,\alpha)$ are known as the hyperbolic coordinates (i.e. the analogue of the polar coordinates) of the point $(x(t), y(t))$ with respect to  $(x_0,y_0)$.

Formula (\ref{evoluzione senza collisioni}) can be obtained by observing that $(x(t), y(t))$ is obviously given by the intersection of $2 $ curves, namely the hyperbolic circle of radius $t$ centered at $q=(x_0, y_0)$ , having equation
\begin{align}
(x-x_0)^2 + (y-y_0 \cosh t)^2= y_0^2 \sinh ^2 t
\end{align}
and the geodesic line tangent to $ v$ at the point $q$, having equation
\begin{align}
(x-x_0-y_0 \tan \alpha )^2+y^2= \frac{y_0^2}{\cos \alpha ^2}.
\end{align}
By deriving (\ref{evoluzione senza collisioni}) with respect to $t$ we obtain the Euclidean velocity (i.e. the velocity perceived by an Euclidean observer):
\begin{align}
\dot{\Phi}^{(q,v)}(t)= \begin{pmatrix}
\dot{x}(t)  \\ \dot{y}(t)
\end{pmatrix} = \frac{y_0}{(\cosh t- \sin \alpha \sinh t)^2} \begin{pmatrix}
\cos \alpha \\ -\sinh t + \sin \alpha \cosh t
\end{pmatrix}
\end{align}
which is a parallel vector field along the curve $\Phi ^{(q,v)}(t)$, whose norm is given by $||\dot{\Phi}^{(q,v)}(t)|| = y(t) $. The unit velocity vector is given by
\begin{align}
V^{(q,v)}(t)= \frac{\dot{\Phi} ^{(q,v)}(t)}{||\dot{\Phi}^{(q,v)}(t)||} = \frac{1}{\cosh t - \sin \alpha \sinh t} \begin{pmatrix}
\cos \alpha \\ -\sinh t + \sin \alpha \cosh t
\end{pmatrix}.
\end{align} 

We now consider the case where the path is a piecewise geodesic. Let $\tau_1 ... \tau _n$ be the hitting times and $\beta _1... \beta _n$ be the corresponding deflection angles.
The particle starts at $q=(x_0, y_0)$ with velocity $v=(\cos \alpha  , \sin \alpha )$, then it travels along the geodesic line until a time $\tau _1 $, when the position is $\Phi^{(q,v)}(\tau_1)$ and the unit  velocity is given by the vector $ V^{(q,v)}(\tau_1^-)$, which changes into $v_1 = (\cos \alpha _1 , \sin \alpha _1 )= R_{\beta _1}V^{(q,v)}(\tau_1 ^-) $. During the time interval $[\tau_{j-1}, \tau _j]$ the velocity evolves from $v_{j-1} $  to $V(\tau_j^-)  $ and, at time  $\tau _j$ it is changed to  $v_j=(\cos \alpha _j, \sin \alpha _j)=R_{\beta _j} V(\tau_j^-)$. By iterating (\ref{evoluzione senza collisioni}) we immediately obtain 

\begin{align}
 \label{evoluzione con le collisioni}
\Phi _n ^{(q,v)}(t)= \begin{pmatrix}
 x(\tau _n) +y(\tau _n)\frac{ \sinh (t-\tau _n) \cos \alpha _n}{ \cosh (t-\tau _n) - \sin \alpha _{n } \sinh (t-\tau _n) } \\
y(\tau _n) \cdot \frac{1}{\cosh (t-\tau _n) - \sin \alpha _n \sinh (t-\tau _n)  } \end{pmatrix}\qquad 
\end{align}

for $ \tau _n <t<\tau_{n+1}$,   where $x(\tau _n)$ and $y(\tau _n)$ can be computed in a recursive way by using (\ref{evoluzione senza collisioni}).
\\

\textbf{Acknowledgments.}
We wish to thank Alberto Di Iorio for drawing the pictures of this paper.
\\
\

\vspace{.5cm}

\end{document}